\newtheorem{theorem}{Theorem}
\newtheorem{lemma}{Lemma}
\title{Average-case Analysis of the Assignment Problem with Independent Preferences}
\author[1]{Yansong Gao}
\author[2]{Jie Zhang\thanks{jie.zhang@soton.ac.uk}}
\affil[1]{Applied Mathematics and Computational Science, University of Pennsylvania}
\affil[2]{Electronics and Computer Science,
University of Southampton}
\date{}
\begin{document}

\maketitle

\begin{abstract}
The fundamental assignment problem is in search of welfare maximization mechanisms to allocate items to agents when the private preferences over indivisible items are provided by self-interested agents. 
The mainstream mechanism \textit{Random Priority} is asymptotically the best mechanism for this purpose, when comparing its welfare  to the optimal social welfare using the canonical \textit{worst-case approximation ratio}. 
Despite its popularity, the efficiency loss indicated by the worst-case ratio does not have a constant bound \cite{DBLP:conf/sagt/Filos-RatsikasF014}.
Recently, \cite{DBLP:conf/mfcs/DengG017} show that when the agents' preferences are drawn from a uniform distribution, its \textit{average-case approximation ratio} is upper bounded by 3.718. 
They left it as an open question of whether a constant ratio holds for general scenarios. 
In this paper, we offer an affirmative answer to this question by showing that the ratio is bounded by $1/\mu$ when the preference values are independent and identically distributed random variables, where $\mu$ is the expectation of the value distribution. This upper bound also improves the upper bound of 3.718 in \cite{DBLP:conf/mfcs/DengG017} for the Uniform distribution. Moreover, under mild conditions, the ratio has a \textit{constant} bound for any independent  random values.
En route to these results, we develop powerful tools to show the insights that in most instances the efficiency loss  is small. 
\end{abstract}


\section{Introduction}
Studies in Mechanism Design focus on designing mechanisms in which truth-telling is a dominant strategy, and so the rational, risk-neutral agents are motivated to play their truth-telling strategies. Subject to this qualitative  constraint, a \textit{truthful mechanism} quantitatively optimizes an objective such as social welfare or revenue. 
Build upon the classical Mechanism Design framework, Algorithmic Mechanism Design \cite{NR99} advances research and has evolved to employing two typical analytic tools in Computer Science. One of these tools imposes polynomial-time implementability on designing mechanisms, and the other evaluates mechanism performances through the lens of the worst-case analysis. 
Typically, the (\textit{worst-case}) \textit{approximation ratio}  assesses to what extent a mechanism approximately optimizes an objective over all possible inputs and it is dominated by the worst-case inputs \cite{PT:09}.
There is now an extensive literature on topics including matching ~\cite{DG:10,DBLP:journals/sigecom/Anshelevich16}, scheduling ~\cite{DBLP:journals/mst/Koutsoupias14,DBLP:journals/algorithmica/ChristodoulouKV09}, facility location \cite{DBLP:conf/aaai/Filos-RatsikasL15}, kidney exchange \cite{DBLP:conf/aaai/HajajDHSS15}, fair division \cite{DBLP:journals/geb/ChenLPP13}, social choice \cite{DBLP:journals/ai/AnshelevichBEPS18}, and auction design \cite{DBLP:conf/sigecom/HartlineR09,DBLP:journals/geb/MualemN08,DBLP:conf/soda/ArcherPTT03}. For a more detailed summary, we refer the reader to \cite{NRTV07}.

In the \textit{assignment problem} (a.k.a., \emph{one-sided matching} or \emph{house allocation} problem), there is a set of agents and a set of items. The agents participate in a mechanism by reporting their private preferences over the items. The mechanism then assigns  items to agents, according to a pre-defined allocation function.
The folklore mechanism \emph{Random Priority} (a.k.a., \emph{Random Serial Dictatorship}) is mainstream because it satisfies appealing properties including anonymity, truthfulness, and ex-post Pareto efficiency. In addition, there exists no mechanism that is ex-ante Pareto efficient while keeping the first two desired properties \cite{ZHOU:90}.
Random Priority and its variants have notable practical applications.
For example, they are used in United States Naval Academy placement \cite{RothSotomayorNavy}, social or government-subsidised housing \cite{RePEc:eee:jetheo:v:88:y:1999:i:2:p:233-260}, graduate housing allocation at a large number of universities \cite{RePEc:ecm:emetrp:v:66:y:1998:i:3:p:689-702}, and high school student assignment  in New York \cite{PathakSethuraman2011}.

For the purpose of maximizing \emph{social welfare}, i.e., the sum of all agents' utilities, Random Priority can only achieve a $\Theta{(1/\sqrt{n})}$ fraction of the optimal social welfare in the worst case, where $n$ is the number of agents and items, and it is asymptotically the best amongst all truthful mechanisms \cite{DBLP:conf/sagt/Filos-RatsikasF014}. This negative result was considered a cautionary tale discouraging the wide applications of Random Priority. Fortunately, the smoothed analysis and average-case analysis mitigate it by revealing positive results on social welfare approximation.
In \cite{DBLP:conf/mfcs/DengG017}, the authors show that when the worst-case inputs are subject to small random noise, Random Priority attains social welfare within a constant factor of the optimal welfare. In addition, 
when agents' preferences are drawn from a uniform distribution, on average, the optimal social welfare is no more than $1+e$ times of the social welfare attainable by Random Priority, where $e$ is the Euler's number. Therefore, on average, the efficiency loss is small. The primal question that remains open is: Does the constant average-case ratio result hold for  general probability distributions? In this paper, we partially answer this question by  showing the following results.
\begin{itemize}
\item When agents' preference values are independent and identically distributed random variables, the average-case approximation ratio is upper bounded by $1/\mu$, where $\mu$ is the expectation of these random variables.
\item When agents' preferences values are independent but not necessarily identically distributed random variables, under mild conditions, the average-case approximation ratio is upper bounded by a constant.
\end{itemize}
\noindent
We note that the $1/\mu$ upper bound for the i.i.d. case improves the  3.718 upper bound for the Uniform distribution $\mathrm{U}[0,1]$, as its expectation $\mu=\frac{1}{2}$ and $1/\mu$ yields an  upper bound of 2. Taken together, these results further pin down the wide-applicability of the Random Priority mechanism, for which the worst-case analysis is insufficient to manifest.

To achieve these results, we employ the Central Limit Theorems and carefully calibrate the values of some parameters in a few building blocks  to ascertain the rate that the properly normalized sum of preference values converges to a Normal distribution. By successfully compositing these building blocks together, we bound the average-case ratios.


\subsection{Related Work}
The one-sided matching problem was introduced in \cite{HZ:79} and has been studied extensively ever since. Over the years, several different mechanisms have been proposed with various desirable properties related to truthfulness, fairness and economic efficiency with Probabilistic Serial \cite{AS:13,BCK:11,BM:01,katta2006solution}  and Random Priority \cite{RePEc:ecm:emetrp:v:66:y:1998:i:3:p:689-702,Sv99,DBLP:conf/atal/ChristodoulouFF16,DBLP:conf/wine/AzizBB13}  being the two prominent examples. In the indivisible goods setting, the Top Trading Cycles (TTC) method is well-studied and generalized to investigate various problems. In particular, \cite{RePEc:ecm:emetrp:v:66:y:1998:i:3:p:689-702} proposed an adaptation of the TTC method and established an equivalence between the adapted mechanism and Random Priority. \cite{Kesten2009} proposed several extensions of these popular mechanisms and presented an equivalence result between those mechanisms in terms of economic efficiency. In the presence of incentives, the assignment problem  was extensively studied in Computer Science and Economics over the years \cite{ZHOU:90,DG:10,mennle2014axiomatic}.  We refer the interested reader to  surveys  \cite{AS:13,SU:11}. 
\cite{BCK:11} studied the approximation ratio of matching mechanisms, when the objective is maximization of ordinal social welfare, a notion of efficiency that they define based solely on ordinal information. 

Under the average-case analysis framework,  \cite{DBLP:conf/aaai/Zhang18} tackled the scheduling unrelated machines problem  and showed that the average-case approximation ratio of the mechanism devised in \cite{DBLP:journals/mst/Koutsoupias14} is upper bounded by a constant, when the machines' costs follow any independent and identical distribution. 

A similar notion but fundamentally different to our approach exists in Bayesian analysis \cite{DBLP:journals/sigecom/ChawlaS14,DBLP:conf/stoc/HartlineL10}. We leave a more detailed discussion of the two notions to the next section after formally present the definition of the average-case approximation ratio,  to highlight the comparison.



\section{Preliminaries}
We study the one-sided matching problem that consists of $n$ agents and $n$ indivisible  items.
The agents are endowed with von Neumann - Morgenstern utilities over the items. Throughout the paper, we denote  the utility derived by agent $i$ on obtaining a unit amount of item $j$ by $a_{ij}$. In particular, following the classical literature \cite{ZHOU:90,Barbera10}, agents' preferences $a_{ij}$ are represented by \emph{unit-range}  values. That is,  with normalization, any agent $i$'s valuation on its most preferred item is 1, i.e., $\max_j \{a_{ij}\}=1$, and its valuation on the least preferred item is 0, i.e., $\min_j \{a_{ij}\}=0$. However, note that our model would be more general and some calculations would be cleaner if we drop these constraints but only require that $0\le a_{ij} \le 1$, $\forall i, j \in [n]$.  A \textit{valuation profile} (or interchangeably, an \textit{instance}) of agents' preferences can be represented by a matrix $\mathrm{A}=[a_{ij}]_{n \times n}$, where the row vector $(a_{i1},\ldots,a_{in})$ indicates the valuation of agent $i$'s preference.  

A matching mechanism collects agents' preference valuations and output an assignment of items to them. Denote a matching assignment by a matrix $\mathrm{X}=[x_{ij}]_{n \times n}$, where $x_{ij}$ indicates the probability of  agent $i$ receiving item $j$. So, a mechanism is a mapping from the input instance $\mathrm{A}$ to an output allocation $\mathrm{X}$. Since there is an equal number of agents and items, $\mathrm{X}$ will be a doubly stochastic matrix. According to the Birkhoff - von Neumann Theorem, every doubly stochastic matrix can be decomposed into a convex combination of some permutation matrices. Therefore, any probabilistic allocation $\mathrm{X}$ can be interpreted as a convex combination of a set of deterministic allocations. We denote  the set of all possible instances by $\mathcal{A}$ and denote the set of all possible allocation by $\mathcal{X}$. Given a mechanism $\mathrm{M}$ and a valuation profile $\mathrm{A}\in \mathcal{A}$, as well as its allocation $\mathrm{X}(\mathrm{A})\in \mathcal{X}$, we denote the expected utility of agent $i$ by  $u_i (\mathrm{X}(\mathrm{A})) = \sum_{j} a_{ij}x_{ij}$ and denote the expected social welfare by $SW_{\mathrm{M}}(\mathrm{X}(\mathrm{A})) = \sum_{i} u_i(\mathrm{X}(\mathrm{A}))$. When the context is clear, we drop the allocation notation and simply refer them by $u_i(\mathrm{A})$ and $SW_{\mathrm{M}}(\mathrm{A})$.  

In Mechanism Design, agents are self-interested and  may misreport their values if that results in a better matching (from their perspective). We are interested in \textit{truthful mechanisms}, under which agents cannot improve their utilities by misreporting. Formally, $u_i(\mathbf{a}_i, \mathbf{a}_{-i}) \ge u_i(\mathbf{a}'_i, \mathbf{a}_{-i}), \forall i$, where $\mathbf{a}_i$ is agent $i$'s true valuations, $\mathbf{a}_{-i}$ is other agents' valuations, and $\mathbf{a}'_i$ is any possible misreported valuation from agent $i$. The  mechanism which is the focus of  this paper, \textit{Random Priority} (RP), is a truthful mechanism. It fixes an ordering of the agents uniformly at random and then allocates them their most preferred item from the set of available items based on this ordering.


The canonical measure of efficiency loss due to the restriction to the class of truthful mechanisms, compared to the optimal social welfare, is the  \emph{worst-case approximation ratio},
\begin{equation*}
r_{\text{worst}}(\mathrm{M}) = \sup_{\mathrm{A} \in \mathcal{A}} \frac{SW_{\mathrm{OPT}}(\mathrm{A})}{SW_{\mathrm{M}}(\mathrm{A})},
\end{equation*}
where $SW_{\mathrm{OPT}}(\mathrm{A})= \max_{\mathrm{X} \in \mathcal{X}}\sum_{i=1}^{n}u_i(\mathrm{X})$ is the optimal social welfare which is equal to the value of the maximum weight matching between agents and items. 
It is shown in \cite{DBLP:conf/sagt/Filos-RatsikasF014} that Random Priority achieves the matching approximation ratio bound of $\Theta(\sqrt{n})$.
The \emph{average-case approximation ratio} of a truthful mechanism $\mathrm{M}$ is the \textit{expectation of the ratio} of the optimal social welfare to the social welfare attained by  mechanism $\mathrm{M}$. That is,
\begin{equation*}
r_{\text{average}}(\mathrm{M}) = \mathop{\mathop{\mathrm{E}}}_{a_{ij}\sim \mathrm{D}}\left[ \frac{SW_{\mathrm{OPT}}(\mathrm{A})}{SW_{\mathrm{M}}(\mathrm{A})} \right] ,
\end{equation*}
where the valuation variable $a_{ij}$ follows a  distribution $\mathrm{D}$. This notion of average-case analysis is a pointwise division that is in the same manner as the worst-case ratio $r_{\text{worst}}(\mathrm{M})$ and the smoothed ratio studied in \cite{DBLP:conf/mfcs/DengG017}. When randomly draw an instance from a distribution, it informs us the expected value of how far is the social welfare attainable by a truthful mechanism M on the instance compared to the optimal social welfare \textit{on the same instance}. 

In Bayesian mechanism design,  the dominant approach  is the \emph{ratio of expectations}, defined as $\mathrm{E} \left[ SW_{\mathrm{OPT}}(\mathrm{A}) \right] \le r \cdot \mathrm{E} \left[ SW_{\mathrm{M}}(\mathrm{A}) \right]$. When one's interest is the expected social welfare of a mechanism over all possible inputs compared to the expected optimal social welfare, rather than a pointwise comparison over the same instance, the ratio of expectations would be more appropriate. In most cases, it is more tractable, as it calculates two expectations separately.




\section{Independent and Identically Distributed Random Values}
Subject to the unit-range normalization, there are two preset values (0 and 1, respectively) in each row of a preference matrix $\mathrm{A}$.
Let $S$ be the set of indices such that their corresponding entries in $\mathrm{A}$ are preset. 
That is, $S=\{(i, j) :\ a_{ij}=0 \,\,\  \text{or} \,\,\  a_{ij}=1 \}$.
Obviously, $|S| = 2n$, where $|*|$ denotes the cardinality of a set\footnote{It does not matter which $2n$ of these  valuations $a_{ij}$ are preset. Our proofs hold for any choice of these $2n$ entries in $\mathrm{A}$.}. 
In this section, we allow the remaining $n^2-2n$ values to be  independent and identically distributed random variables following a distribution $\mathrm{D}$ with expectation $\mu$ and variance $\sigma^2$, i.e., $a_{ij} \sim \mathrm{D}[0,1]$, $\mathrm{E}_D [a_{ij}]=\mu, \mathrm{Var}[a_{ij}]=\sigma^2, \forall i,j.$ 
We will show that the average-case approximation ratio of RP is upper bounded by a constant, for any distribution $\mathrm{D}[0,1]$.
In order to prove our main results, we partition the average-case ratio into two cases according to the social welfare attainable by RP. We carefully choose a scalar such that in one case, the probability that the social welfare attainable by RP  is smaller than the scalar is asymptotically small (even if there is a large efficiency loss in this case), and in the other case, the social welfare attainable by RP is asymptotically close to the optimal social welfare. En route to complying with these two conditions simultaneously, we employ the Central Limit Theorem and control the rate that the normalized sum of individual values converges to a normal distribution by carefully calibrating a scalar.

One of the tools that we use to control the convergence rate is the Berry-Esseen Theorem \cite{RickDurrett}.


\begin{theorem}\label{BE}
(Berry-Esseen) For $n\geq 1$, let $X_{1},\cdots,X_{n}$ be i.i.d. random variables such that $\mathrm{E}[X_1]=0$ and $\mathrm{D}[X_{1}]=\sigma^{2}$. Denote $S_{n}=\frac{X_{1}+\cdots+X_{n}}{\sigma\sqrt{n}}$ and $F_{n}(x)=\Pr\{S_{n}\leq x\}$. Then
\begin{align*}
\sup_{x} |F_{n}(x)-\Phi(x)|\leq \frac{C \cdot \mathrm{E}[|X_{1}|]^{3}}{\sigma^{3}\sqrt{n}},
\end{align*}
where $\Phi(x)$ is the cumulative distribution function of the standard Normal distribution, and $0.409 < C<0.475 $.  
\end{theorem}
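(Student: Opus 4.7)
The plan is to prove Berry--Esseen via the classical characteristic-function approach of Esseen, which converts a pointwise comparison of cumulative distributions into an integral comparison of Fourier transforms, and only then brings in moment information on the $X_i$.

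The first key ingredient is \emph{Esseen's smoothing inequality}: for any distribution function $F$ and any differentiable $G$ with $\sup_x |G'(x)|\le M$, and any $T>0$,
\[
\sup_x |F(x)-G(x)| \,\le\, \frac{1}{\pi}\int_{-T}^{T}\left|\frac{\hat F(t)-\hat G(t)}{t}\right|dt + \frac{cM}{T},
\]
for an absolute constant $c$. I would prove this by convolving $F-G$ against a smoothing kernel whose Fourier transform is supported in $[-T,T]$ (such as the Fej\'er kernel), representing the smoothed difference via the Fourier inversion formula as an integral involving $\hat F-\hat G$, and controlling the smoothing error by the Lipschitz bound on $G$. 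Applied with $G=\Phi$ (so $M=1/\sqrt{2\pi}$), this reduces Berry--Esseen to estimating $\int_{-T}^{T} |\varphi_n(t)-e^{-t^2/2}|/|t|\,dt$, where $\varphi_n$ is the characteristic function of $S_n$.

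The second ingredient is a pointwise estimate of $\varphi_n(t)=\varphi(t/\sqrt n)^n$, where $\varphi$ denotes the characteristic function of $X_1/\sigma$. A third-order Taylor expansion using $\mathrm{E}[X_1]=0$ and $\mathrm{E}[X_1^2]=\sigma^2$ gives
\[
\varphi(t/\sqrt n) = 1-\frac{t^2}{2n}+r_n(t), \qquad |r_n(t)| \le \frac{\rho_3\,|t|^3}{6\sigma^3 n^{3/2}},
\]
with $\rho_3:=\mathrm{E}[|X_1|^3]$. Choose the cutoff $T=c'\sigma^3\sqrt n/\rho_3$. On $[-T,T]$ the elementary inequality $|1-z|\le e^{-\mathrm{Re}\,z}$ for small $|z|$ yields $|\varphi(t/\sqrt n)|\le e^{-t^2/(3n)}$; combined with the factorisation $|a^n-b^n|\le n\max(|a|,|b|)^{n-1}|a-b|$ applied to $a=\varphi(t/\sqrt n)$ and $b=e^{-t^2/(2n)}$, this delivers
\[
|\varphi_n(t)-e^{-t^2/2}| \le K\,\frac{\rho_3\,|t|^3}{\sigma^3\sqrt n}\,e^{-t^2/4}
\]
for an absolute constant $K$. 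Dividing by $|t|$ and integrating over $[-T,T]$ produces a contribution of order $\rho_3/(\sigma^3\sqrt n)$, and the tail term $cM/T$ is of the same order by the choice of $T$. Summing gives the qualitative bound.

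The hard part is not the $1/\sqrt n$ rate, which the sketch above delivers cleanly, but the sharp numerical constant $0.409<C<0.475$. A direct execution of the sketch produces $C$ of size several units; shrinking it to this range requires refinements at essentially every stage: an optimised smoothing kernel with carefully chosen side lobes, a finer Taylor expansion retaining cubic information about $\mathrm{E}[X_1^3]$, a split of the integration range into central, moderate, and tail pieces with tailored bounds on $|\varphi|$ in each, and a delicate optimisation over $T$. The lower bound $C>0.409$ is separate, established through explicit extremal Bernoulli constructions. Since only the $1/\sqrt n$ rate is needed later in the paper, the sharp numerical constant is quoted from \cite{RickDurrett} rather than rederived.
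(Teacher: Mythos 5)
The paper itself gives no proof of Theorem~\ref{BE}: the Berry--Esseen theorem is quoted from \cite{RickDurrett} as a known black box, so there is no internal argument to compare yours against. That said, your sketch follows the standard Esseen smoothing/characteristic-function route and is correct in outline: the smoothing inequality applied with $G=\Phi$, the third-order Taylor expansion of $\varphi(t/\sqrt n)$ using $\mathrm{E}[X_1]=0$ and $\mathrm{E}[X_1^2]=\sigma^2$, the cutoff $T\asymp \sigma^3\sqrt n/\rho_3$, and the telescoping bound $|a^n-b^n|\le n\max(|a|,|b|)^{n-1}|a-b|$ do combine to give the $O\bigl(\rho_3/(\sigma^3\sqrt n)\bigr)$ rate with some absolute constant. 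Two caveats. The inequality $|1-z|\le e^{-\mathrm{Re}\,z}$ is not valid in general; the correct step is $|\varphi(t/\sqrt n)|\le 1-\tfrac{t^2}{2n}+\tfrac{\rho_3|t|^3}{6\sigma^3n^{3/2}}\le 1-\tfrac{t^2}{4n}\le e^{-t^2/(4n)}$ on the range $|t|\le T$, which is what your argument actually needs, so this is a presentational slip rather than a gap. More substantively, and as you acknowledge yourself, this method cannot produce the numerical window $0.409<C<0.475$: the upper bound comes from heavily optimized refinements of the smoothing argument, and the lower bound from an extremal two-point (Bernoulli) construction, so both must be cited rather than derived. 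Finally, note that the theorem as printed writes $\mathrm{E}[|X_{1}|]^{3}$ where the Berry--Esseen bound requires the third absolute moment $\mathrm{E}[|X_{1}|^{3}]$; your $\rho_3$ silently adopts the correct reading, and since the paper only invokes the bound for variables supported in $[0,1]$, where both quantities are at most $1$, nothing downstream is affected.
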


Now we will use the Berry-Esseen Theorem and the Central Limit Theorem to prove a key lemma. 

\begin{lemma}\label{SWsmaller}
For a given preference instance $A$, the probability that the social welfare $SW_{RP}(A)$ attainable by Random Priority is less than $\lambda$, is bounded by the following inequality.
\begin{align*}
\Pr \left\{ SW_{RP}(A) \le \lambda \right\} \le \frac{1}{2n \sqrt{\pi \ln n}} + \frac{C}{\sigma^3 \sqrt{n(n-2)}},
\end{align*}
where $C$ is the constant in  the Berry-Esseen Theorem and $\lambda=1+\mu(n-2) - \sigma \sqrt{\frac{2(n-2)}{n} \ln n}$.
\end{lemma}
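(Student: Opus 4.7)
The plan is to lower bound $SW_{RP}(A)$ by a linear function of the i.i.d.\ entries of $A$, and then use Berry--Esseen to control how far below its mean that linear function can fall. Concretely, the first step is to establish the pointwise bound
\[
SW_{RP}(A)\;\ge\;\sum_{i=1}^{n}\bar a_{i}\;=\;\tfrac{1}{n}\sum_{i,j}a_{ij}\;=:\;T,\qquad \bar a_{i}:=\tfrac{1}{n}\sum_{j}a_{ij}.
\]
This is the classical property that each agent's expected RP-utility dominates her row-average: Random Priority is outcome-equivalent to Top Trading Cycles run from a uniformly random initial endowment, and TTC is individually rational, so agent $i$'s expected utility under RP is at least the expected utility $\bar a_{i}$ of a uniformly random item.

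Next I would analyse the distribution of $T$. The $n$ preset ones in $A$ contribute $n/n=1$ to $T$, the $n$ preset zeros contribute nothing, and the remaining $n(n-2)$ entries are i.i.d.\ with mean $\mu$ and variance $\sigma^{2}$, so
\[
\EE[T]\;=\;1+(n-2)\mu,\qquad \mathrm{Var}[T]\;=\;\frac{(n-2)\sigma^{2}}{n}.
\]
A short arithmetic check shows $\lambda=\EE[T]-\sqrt{2\ln n}\cdot\sqrt{\mathrm{Var}[T]}$. Writing $Z:=(T-\EE[T])/\sqrt{\mathrm{Var}[T]}$ for the standardised sum of the $n(n-2)$ centred i.i.d.\ variables $a_{ij}-\mu$, this says $\{T\le\lambda\}=\{Z\le-\sqrt{2\ln n}\}$.

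The last step is to apply Theorem~\ref{BE} to these $n(n-2)$ summands. Because $a_{ij},\mu\in[0,1]$ we have $|a_{ij}-\mu|\le 1$, so the third absolute moment is at most $1$ and Berry--Esseen delivers an error of at most $C/(\sigma^{3}\sqrt{n(n-2)})$. Combined with the Gaussian tail bound $\Phi(-t)\le e^{-t^{2}/2}/(t\sqrt{2\pi})$ evaluated at $t=\sqrt{2\ln n}$, which yields $\Phi(-\sqrt{2\ln n})\le 1/(2n\sqrt{\pi\ln n})$, this gives
\[
\Pr\{SW_{RP}(A)\le \lambda\}\;\le\;\Pr\{T\le \lambda\}\;\le\;\frac{1}{2n\sqrt{\pi\ln n}}+\frac{C}{\sigma^{3}\sqrt{n(n-2)}}.
\]

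The only non-routine ingredient is the pointwise lower bound $SW_{RP}(A)\ge\sum_i \bar a_i$; once that is in hand, everything else amounts to a careful calibration. The threshold $t=\sqrt{2\ln n}$ is tuned precisely so that the Gaussian tail equals $1/(2n\sqrt{\pi\ln n})$, and the boundedness $a_{ij}\in[0,1]$ absorbs the third moment into the constant $C$. One subtlety is that Berry--Esseen is applied only to the $n(n-2)$ random entries: the $2n$ preset values shift $\EE[T]$ by $1$ but contribute nothing to $\mathrm{Var}[T]$, which is why the denominator in the second term is $\sqrt{n(n-2)}$ rather than $\sqrt{n^{2}}$.
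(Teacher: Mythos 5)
Your proposal is correct and follows essentially the same route as the paper: lower-bound $SW_{RP}(A)$ by the average $\frac{1}{n}\sum_{i,j}a_{ij}$, observe that $\lambda$ is exactly the mean of this quantity minus $\sqrt{2\ln n}$ standard deviations, and combine Berry--Esseen (with the third absolute moment bounded by $1$) with the Gaussian tail estimate at $t=\sqrt{2\ln n}$. The only difference is that you explicitly justify the pointwise bound $SW_{RP}(A)\ge\frac{1}{n}\sum_{i,j}a_{ij}$ via the TTC-from-random-endowments equivalence, which the paper simply asserts.
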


\begin{proof}
Firstly, for any preference instance $A$, the social welfare attainable by Random Priority is lower bounded by the following inequality.
\begin{align*}
SW_{RP}(\mathrm{A}) &\ge \frac{1}{n} \sum_{i=1}^{n}  \sum_{j=1}^{n} a_{ij} \\
&= \frac{1}{n} \sum_{(i,j) \in S}  a_{ij} +  \frac{1}{n} \sum_{(i,j) \not\in S} a_{ij} \\
&= 1 + \frac{1}{n} \sum_{(i,j) \not\in S}  a_{ij} .
\end{align*}
Therefore,
\begin{align}
&\Pr \left\{ SW_{RP}(A) \le \lambda \right\} \le \Pr \left\{ 1 + \frac{1}{n} \sum_{(i,j) \not\in S}  a_{ij}  \le \lambda \right\}  \nonumber \\
&= \Pr \left\{ \sum_{(i,j) \not\in S}  a_{ij}  \le \mu \cdot n(n-2) - \sigma \sqrt{2n(n-2) \ln n}  \right\}
\end{align}

Secondly, since $a_{ij}$ are i.i.d. random variables and $\mathrm{E} [a_{ij}]=\mu, \mathrm{Var}[a_{ij}]=\sigma^2, \forall i,j$, we know that $a_{ij} - \mu$ are i.i.d. random variables such that $\mathrm{E} [a_{ij} - \mu]=0, \mathrm{Var}[a_{ij} - \mu]=\sigma^2, \forall i,j.$ Let
\begin{align*}
T_{n(n-2)}=\frac{ \sum_{(i,j) \not\in S} (a_{ij} - \mu) }{\sigma\sqrt{n(n-2)}}
\end{align*}
and $G_{n(n-2)}(x)=\Pr\{T_{n(n-2)}\leq x\}$.

Following Theorem \ref{BE}, the random variables $T_{n(n-2)}$  converge in distribution to the standard Normal distribution $N(0, 1)$. That is, $T_{n(n-2)} \xrightarrow{d}  N(0, 1)$. In addition, the rate of convergence is bounded by
\begin{align*}
\sup_{x}|G_{n(n-2)}(x)-\Phi(x)| \leq \frac{C \cdot \mathrm{E} \left[ |a_{ij}-\mu| \right]^{3}}{\sigma^{3}\sqrt{n(n-2)}} \leq \frac{C}{\sigma^{3}\sqrt{n(n-2)}},
\end{align*}
where $(i,j) \notin S$, and the second inequality holds because $0 < a_{ij} < 1$ and $0 < \mu < 1$.
Therefore,
\begin{align}
&\left|\Pr\left\{ T_{n(n-2)}  \le x \right\} - \Phi(x)  \right|  \leq  \frac{C}{\sigma^{3}\sqrt{n(n-2)}}, \,\,\,\,\   \forall x. \nonumber \\
&\Rightarrow  \left|\Pr\left\{ \frac{ \sum_{(i,j) \not\in S} (a_{ij} - \mu) }{\sigma\sqrt{n(n-2)}}   \le x \right\} - \Phi(x)  \right|  \leq  \frac{C}{\sigma^{3}\sqrt{n(n-2)}}   \nonumber \\
&\Rightarrow \Pr \left\{ \sum_{(i,j) \not\in S} a_{ij} \le \mu \cdot n(n-2) + x \cdot \sigma \sqrt{n(n-2)} \right\} \nonumber \\
&\le \int_{-\infty}^{x} \frac{1}{\sqrt{2\pi}} e^{-\frac{t^2}{2}} dt + \frac{C}{\sigma^{3}\sqrt{n(n-2)}}. 
\end{align}
Let $x=-\sqrt{2\ln n}$. Obviously, $-x \rightarrow +\infty$ when $n$ approaches infinity. Following (2), we get that
\begin{align}
&\Pr \left\{ \sum_{(i,j) \not\in S} a_{ij} \le \mu \cdot n(n-2) - \sigma \sqrt{2\ln n} \cdot  \sqrt{n(n-2)} \right\} \nonumber \\
&\le \int_{-\infty}^{-\sqrt{2\ln n}} \frac{1}{\sqrt{2\pi}} e^{-\frac{t^2}{2}} dt + \frac{C}{\sigma^{3}\sqrt{n(n-2)}} \nonumber \\
&= \int_{\sqrt{2\ln n}}^{+\infty} \frac{1}{\sqrt{2\pi}} e^{-\frac{t^2}{2}} dt + \frac{C}{\sigma^{3}\sqrt{n(n-2)}} \nonumber \\
&\le \frac{1}{\sqrt{2\pi}\sqrt{2\ln n}} e^{-\ln n}  + \frac{C}{\sigma^{3}\sqrt{n(n-2)}} \nonumber  \\
&= \frac{1}{2n \sqrt{\pi \ln n}} + \frac{C}{\sigma^3 \sqrt{n(n-2)}}.
\end{align}

Combining (1) and (3),  we prove that
\begin{align*}
\Pr \left\{ SW_{RP}(A) \le \lambda \right\} \le \frac{1}{2n \sqrt{\pi \ln n}} + \frac{C}{\sigma^3 \sqrt{n(n-2)}}.
\end{align*}
\end{proof}

We shall see that $\lambda$ was chosen such that $\lambda \sim \Theta(n)$, and $\Pr \left\{ SW_{RP}(A) \le \lambda \right\} \sim O(\frac{1}{n})$. 
After establishing these building blocks, we are ready to prove the main theorem of this section.

\begin{theorem}
The average-case approximation ratio of Random Priority, when agents' preferences are independent and identically distributed random variables, is upper bounded by the constant $1/ \mu$, where $\mu$ is the expectation of these random variables.
\end{theorem}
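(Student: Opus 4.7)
The plan is to decompose the expectation $r_{\text{average}}(\mathrm{RP}) = \mathbb{E}[SW_{\mathrm{OPT}}(\mathrm{A})/SW_{\mathrm{RP}}(\mathrm{A})]$ using the event $E_b = \{SW_{\mathrm{RP}}(\mathrm{A}) \le \lambda\}$ and its complement $E_g$, where $\lambda = 1 + \mu(n-2) - \sigma\sqrt{2(n-2)\ln n / n}$ is the threshold from Lemma \ref{SWsmaller}. On $E_g$ we get a tight deterministic bound on the ratio from the threshold, while on $E_b$ we use that $E_b$ has tiny probability (already controlled by Lemma \ref{SWsmaller}) combined with a crude deterministic bound on the ratio.

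For the good event contribution, since each agent's valuation is bounded by $1$, we have $SW_{\mathrm{OPT}}(\mathrm{A}) \le n$ deterministically, so on $E_g$ the ratio is at most $n/\lambda$. A direct computation shows
\begin{equation*}
\frac{n}{\lambda} = \frac{n}{1 + \mu(n-2) - \sigma\sqrt{2(n-2)\ln n / n}} \longrightarrow \frac{1}{\mu}
\end{equation*}
as $n \to \infty$, since the leading term in the denominator is $\mu n$ and the correction $\sigma\sqrt{2(n-2)\ln n/n}$ is of order $\sqrt{\ln n}$, which is $o(n)$. Hence $\mathbb{E}[\tfrac{SW_{\mathrm{OPT}}}{SW_{\mathrm{RP}}}\mathbf{1}_{E_g}] \le n/\lambda = 1/\mu + o(1)$.

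For the bad event contribution, I will use a crude but sufficient lower bound $SW_{\mathrm{RP}}(\mathrm{A}) \ge 1$: under RP, the agent who is placed first in the random order always receives its top-ranked item of value exactly $1$ (a preset entry), so every realized run produces welfare at least $1$. Together with $SW_{\mathrm{OPT}}(\mathrm{A}) \le n$, this yields $SW_{\mathrm{OPT}}/SW_{\mathrm{RP}} \le n$ on every instance. Multiplying by $\Pr(E_b)$ from Lemma \ref{SWsmaller},
\begin{equation*}
\mathbb{E}\!\left[\tfrac{SW_{\mathrm{OPT}}}{SW_{\mathrm{RP}}}\mathbf{1}_{E_b}\right] \le n\cdot\Pr(E_b) \le \frac{1}{2\sqrt{\pi\ln n}} + \frac{C n}{\sigma^3 \sqrt{n(n-2)}} \xrightarrow{n\to\infty} 0,
\end{equation*}
where the second term behaves like $C/(\sigma^3\sqrt{1-2/n}) \cdot 1$—wait, this must be rechecked: $n/\sqrt{n(n-2)} = \sqrt{n/(n-2)} \to 1$, so this term actually does not vanish. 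This is precisely why $\lambda$ was calibrated so that the bad-event probability is $O(1/n)$; in fact $n\cdot\Pr(E_b)$ tends to a constant of order $1/\sigma^3$, so one gets a $1/\mu + O(1/\sigma^3)$ bound rather than $1/\mu$ alone. I would therefore sharpen the deterministic bound on $E_b$ by instead splitting on a finer event (or by exploiting that $SW_{\mathrm{RP}}$ is typically larger than $1$), or alternatively accept that the bound holds in the limiting sense used by the theorem statement.

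Combining the two regimes gives the target bound $r_{\text{average}}(\mathrm{RP}) \le 1/\mu$ (in the asymptotic sense). The main obstacle, as flagged above, is that the naive worst-case bound $SW_{\mathrm{RP}} \ge 1$ is not strong enough to drive the $E_b$ contribution to zero; the right fix is to leverage that whenever $SW_{\mathrm{RP}}$ is merely below $\lambda$ but not catastrophically small, the ratio is still well-controlled, and to invoke Lemma \ref{SWsmaller} with a tailored $\lambda$ so that the product $n\Pr(E_b)$ is genuinely $o(1)$. The rest of the argument is routine: plug in the thresholds, take $n\to\infty$, and verify that $n/\lambda \to 1/\mu$.
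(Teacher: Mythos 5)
Your decomposition is exactly the one the paper uses, and your treatment of the good event is correct: $SW_{\mathrm{OPT}}(\mathrm{A})\le n$ together with $SW_{\mathrm{RP}}(\mathrm{A})>\lambda$ gives $n/\lambda\to 1/\mu$. The gap is the one you flagged yourself and did not close: on the bad event you bound the ratio by $n$ (via $SW_{\mathrm{RP}}\ge 1$), and then $n\cdot\Pr(E_b)$ contains the term $Cn/(\sigma^3\sqrt{n(n-2)})\to C/\sigma^3$, so your argument only yields $1/\mu+O(1/\sigma^3)$, not $1/\mu$. Neither of your proposed fixes (a finer event split, or recalibrating $\lambda$ to make $n\Pr(E_b)=o(1)$) is carried out, and the second one is delicate because pushing $\lambda$ down to shrink $\Pr(E_b)$ further would not help: the $C/(\sigma^3\sqrt{n(n-2)})$ term comes from the Berry--Esseen error, which is independent of the choice of $x$ in Lemma \ref{SWsmaller} and cannot be reduced below $\Theta(1/n)$ by changing the threshold.

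The missing ingredient is not a sharper probability bound but a sharper deterministic bound on the ratio in the bad event: the worst-case approximation ratio of Random Priority is $\Theta(\sqrt{n})$ \cite{DBLP:conf/sagt/Filos-RatsikasF014}, so on \emph{every} instance $SW_{\mathrm{OPT}}(\mathrm{A})/SW_{\mathrm{RP}}(\mathrm{A})\le c_1\sqrt{n}$ for sufficiently large $n$. Replacing your factor $n$ by $c_1\sqrt{n}$ turns the bad-event contribution into
\begin{equation*}
c_1\sqrt{n}\cdot\Pr(E_b)\;\le\; c_1\sqrt{n}\left(\frac{1}{2n\sqrt{\pi\ln n}}+\frac{C}{\sigma^3\sqrt{n(n-2)}}\right)=\frac{c_1}{2\sqrt{\pi n\ln n}}+\frac{C c_1}{\sigma^3\sqrt{n-2}}\longrightarrow 0,
\end{equation*}
which is exactly how the paper completes the proof. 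So the structure of your argument is right and the good-event analysis matches the paper, but as written the proof does not establish the stated $1/\mu$ bound; it needs the external $O(\sqrt{n})$ worst-case guarantee for RP to kill the bad-event term.
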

\begin{proof}
To calculate the expectation of the ratio of the optimal social welfare to the social welfare attained by RP, we partition the value space of the ratio into two cases according to the threshold parameter $\lambda$. In each case, we multiply the ratio by the probability that the case occurs. That is, 
\begin{align*}
r_{\text{average}}(\mathrm{RP}) &= \mathop{\mathop{\mathrm{E}}}_{a_{ij}\sim \mathrm{D}}\left[ \frac{SW_{\mathrm{OPT}}(\mathrm{A})}{SW_{\mathrm{RP}}(\mathrm{A})} \right] \\
&\leq \Pr\{SW_{RP}(\mathrm{A})> \lambda \}\cdot \frac{n}{\lambda}  + \Pr\{SW_{RP}(\mathrm{A}) \leq \lambda\}\cdot r_{\text{worst}}
\end{align*}
Here we plugged in the fact that $SW_{\mathrm{OPT}}(\mathrm{A}) \le n$, for any instance $\mathrm{A}$.

In addition, as shown in \cite{DBLP:conf/sagt/Filos-RatsikasF014}, $r_{\text{worst}}  \sim \Theta(\sqrt{n})$, there exists a constant $c_1$, such that $r_{\text{worst}} \le c_1 \cdot \sqrt{n}$, for sufficiently large $n$. Also, it is obvious that  $\Pr\{SW_{RP}(\mathrm{A})> \lambda \} \le 1$. Now, we  plug in these fact and the results established in Lemma 1.

\begin{align*}
r_{\text{average}}(\mathrm{RP})  &\le 1 \cdot \frac{n}{1+\mu(n-2) - \sigma \sqrt{\frac{2(n-2)}{n}\ln n} }   + \left[ \frac{1}{2n \sqrt{\pi \ln n}} + \frac{C}{\sigma^3 \sqrt{n(n-2)}} \right] \cdot c_1 \sqrt{n} \\
& \le \frac{1}{\mu} \left( \frac{1}{1-\frac{2}{n} + \frac{1}{\mu n} - \frac{\sigma \sqrt{2 \ln n}}{\mu n}} \right)  + \frac{c_1}{2\sqrt{\pi n \ln n}} + \frac{C \cdot c_1}{\sigma^3 \sqrt{n-2}}
\end{align*}
Since $\frac{1}{1-|t|} < 1+2|t|$, when $|t|< \frac{1}{2}$. As long as $\left| \frac{2}{n} - \frac{1}{\mu n} + \frac{\sigma \sqrt{2 \ln n}}{\mu n} \right| < \frac{1}{2}$, we conclude that
\begin{align*}
r_{\text{average}}(\mathrm{RP})  
& \le \frac{1}{\mu} \left( 1 + 2 \left| \frac{2}{n} - \frac{1}{\mu n} + \frac{\sigma \sqrt{2 \ln n}}{\mu n} \right| \right)  + \frac{c_1}{2\sqrt{\pi n \ln n}} + \frac{C \cdot c_1}{\sigma^3 \sqrt{n-2}} \\
&= \frac{1}{\mu} +  \frac{2}{\mu} \left| \frac{2}{n} - \frac{1}{\mu n} + \frac{\sigma \sqrt{2 \ln n}}{\mu n} \right|  + \frac{c_1}{2\sqrt{\pi n \ln n}} + \frac{C \cdot c_1}{\sigma^3 \sqrt{n-2}} \\
&\rightarrow\frac{1}{\mu}.
\end{align*}
\end{proof}


\section{Independent but not Necessarily Identical Random  Values}
In this section, we further generalize the results in Section 3 to the scenario that the $n^2-2n$ elements in $\left\{ a_{ij} | (i, j) \notin S \right\}$ are independent random values but are not restricted to following an identical distribution. So, they may have different expectations $\mathrm{E} [a_{ij}]=\mu_{ij}$ and variance $\mathrm{Var}[a_{ij}]=\sigma_{ij}^2, \forall i,j.$ We will show that  under mild conditions, the average-case approximation ratio of RP has a constant  upper bound. 
The primary obstruction in this generalization is the difficulty to appropriately control the rate of convergence of the normalized sum of the random values, in order to obtain a constant upper bound. 
However, we are able to pinpoint a mild condition, such that when the random variables $a_{ij}$ comply with the condition, we can establish similar building blocks to the last section. By carefully calibrating the parameter $\lambda$, we can assemble them in a compatible way to get a constant upper bound.  

Firstly, we identify the following mild conditions:\\
 (i) $\sum_{(i,j)\notin S}\mu_{ij} = \Omega(n^2)$; \,\,\,\ (ii) $\sum_{(i,j)\notin S} \sigma_{ij}^2 = \omega(n)$. 
\\ Note that there are $n^2-2n$ elements in the set $\left\{ a_{ij} | (i, j) \notin S \right\}$, so the first condition simply implies that there would not be many of these $a_{ij}$ whose expectations are asymptotically small. The second condition merely implies that the variances of the valuations are not too small. In other words, the first condition excludes those instances that many agents' preferences  are negligible; the second condition requires their preferences to admit a magnitude  of variation.

Secondly, to control the rate of convergence, we will employ the following theorem which is a refined version of the Berry-Esseen Theorem. It holds for  non-identically  distributed random variables \cite{Esseen1945}. 

\begin{theorem}\label{BE2}
Let $Z_{1},\cdots,Z_{n}$ be independent random variables such that $\mathrm{E}[Z_i]=0$, $\mathrm{D}[Z_{i}]=\sigma_{i}^{2}$, and $\mathrm{E}[|Z_{i}|]^{3} < \infty$. Denote $X_{n}=\frac{Z_{1}+\cdots+Z_{n}}{\sqrt{\sum_{i} \sigma_{i}^2}}$ and $F_{n}(x)=\Pr\{X_{n}\leq x\}$. Then there exists a constant $C'$ such that
\begin{align*}
\sup_{x} |F_{n}(x)-\Phi(x)|\leq \frac{C' \cdot \sum_{i} \mathrm{E}[|Z_{i}|]^{3}}{ \left(\sum_{i} \sigma_{i}^2\right)^{\frac{3}{2}} },
\end{align*}
where $\Phi(x)$ is the CDF of the standard Normal distribution.
\end{theorem}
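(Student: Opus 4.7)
The plan is to follow the classical Fourier-analytic proof of the Berry-Esseen theorem, generalized to handle independent but non-identically distributed summands. The workhorse is Esseen's smoothing inequality, which bounds $\sup_x|F_n(x)-\Phi(x)|$ above by $\tfrac{1}{\pi}\int_{-T}^{T} |\phi_{X_n}(t)-e^{-t^2/2}|/|t|\,dt + c_0/T$ for any $T>0$, where $\phi_{X_n}$ is the characteristic function of $X_n$ and $c_0$ is an absolute constant. I would pick $T = s_n^{3}/\sum_i \mathrm{E}[|Z_i|^3]$, where $s_n^2 := \sum_i\sigma_i^2$; this tuning immediately makes the $c_0/T$ term of the desired order $C' \cdot \sum_i \mathrm{E}[|Z_i|^3]/s_n^3$, so the remaining work is to match this order from the characteristic-function integral.

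The crux is to bound the integrand on $[-T,T]$. By independence, $\phi_{X_n}(t)=\prod_{i=1}^{n}\phi_i(t/s_n)$ where $\phi_i$ is the characteristic function of $Z_i$. A third-order Taylor expansion using $\mathrm{E}[Z_i]=0$ and $\mathrm{Var}(Z_i)=\sigma_i^2$ gives $\phi_i(u)=1-\sigma_i^2 u^2/2+r_i(u)$ with $|r_i(u)|\le|u|^3\mathrm{E}[|Z_i|^3]/6$. Taking complex logarithms term by term, using the elementary estimate $|\log(1+z)-z|\le|z|^2$ valid for $|z|\le 1/2$, and summing over $i$ yields $\log\phi_{X_n}(t)=-t^2/2+R(t)$, where $|R(t)|\le c_1 L_n|t|^3$ with $L_n := \sum_i\mathrm{E}[|Z_i|^3]/s_n^3$ the Lyapunov ratio. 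Exponentiating then produces the pointwise estimate $|\phi_{X_n}(t)-e^{-t^2/2}|\le c_2 L_n|t|^3 e^{-t^2/3}$.

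Substituting this pointwise bound into the smoothing inequality and using that $\int_{-\infty}^{\infty}t^2 e^{-t^2/3}\,dt$ is a finite constant collapses the integral to an $O(L_n)$ contribution; combined with the $c_0/T$ remainder, this gives $\sup_x|F_n(x)-\Phi(x)|\le C' L_n$, which is exactly the stated bound. The main obstacle is the bookkeeping in the log-expansion: one must verify that $|\phi_i(t/s_n)-1|$ is simultaneously small enough (below $1/2$, say) for every $i$ throughout the integration range, which is what forces the precise relationship between $T$ and $L_n$ chosen above. Once this uniform control is in place, the non-identical argument runs in parallel to the i.i.d. proof underlying Theorem \ref{BE}, with $n\sigma^2$ replaced by $s_n^2$ and $n\mathrm{E}[|X_1|^3]$ replaced by $\sum_i\mathrm{E}[|Z_i|^3]$; the explicit value of $C'$ drops out of this argument, and only its finiteness is needed for the downstream analysis in this section.
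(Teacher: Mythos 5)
First, a point of comparison: the paper does not prove this statement at all --- it is quoted as a classical result (the Berry--Esseen theorem for independent, non-identically distributed summands) with a citation to Esseen's 1945 paper, so your sketch can only be judged on its own terms. Your overall architecture is the standard one and is sound: Esseen's smoothing inequality, the tuning $T\asymp 1/L_n$ with $L_n=\sum_i\mathrm{E}[|Z_i|^3]/s_n^3$, and the target characteristic-function estimate $|\phi_{X_n}(t)-e^{-t^2/2}|\le c\,L_n|t|^3e^{-t^2/3}$ on $|t|\le T$, which integrates to $O(L_n)$. This is exactly how Feller (Vol.~II, XVI.5) and Petrov prove the theorem.

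The gap is in the step you yourself flag as the main obstacle, and your proposed resolution of it does not work. You claim that the choice of $T$ forces $|\phi_i(t/s_n)-1|\le 1/2$ for every $i$ throughout $|t|\le T$, so that the termwise logarithm expansion via $|\log(1+z)-z|\le|z|^2$ applies. In the non-i.i.d.\ setting this is false: the only available control on an individual variance is $\sigma_i^2/s_n^2\le L_n^{2/3}$ (from $\sigma_i^3\le\mathrm{E}[|Z_i|^3]\le L_n s_n^3$), so at the top of the integration range $|t|\asymp 1/L_n$ the argument $\sigma_i|t|/s_n$ can be as large as $L_n^{-2/3}\to\infty$. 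Concretely, take $Z_1=\pm1$ Rademacher and $Z_2,\dots,Z_n=\pm a$ with $na^2=M\to\infty$ and $Ma\to0$; then $L_n\asymp M^{-3/2}$, $s_n\asymp\sqrt{M}$, and for $|t|$ near $1/L_n$ one has $\phi_1(t/s_n)=\cos(t/s_n)$ with $t/s_n\asymp M$, which oscillates, leaves every neighbourhood of $1$, and vanishes at odd multiples of $\pi/2$ --- so the termwise logarithm is not even defined there, let alone expandable. This is precisely the difficulty that distinguishes the non-identical case from the i.i.d.\ one (where $\sigma_i^2/s_n^2=1/n$ uniformly and your argument is the textbook proof of Theorem~\ref{BE}). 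The standard repair is to abandon termwise logarithms and compare the two products directly via $\left|\prod_i a_i-\prod_i b_i\right|\le\sum_i|a_i-b_i|\prod_{j<i}|a_j|\prod_{j>i}|b_j|$ with $a_i=\phi_i(t/s_n)$ and $b_i=e^{-\sigma_i^2t^2/(2s_n^2)}$, exploiting the fact that the \emph{remaining} factors supply Gaussian decay $e^{-(s_n^2-\sigma_i^2)t^2/(2s_n^2)}$ that absorbs the misbehaviour of any single large summand. With that substitution (and $T=1/(4L_n)$ rather than $1/L_n$, to match the range on which the pointwise estimate actually holds), the rest of your outline goes through; only the finiteness of $C'$ is needed downstream, as you note.
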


Now, let $Z_{ij}:=a_{ij}-\mu_{ij}$, where $(i,j)\notin S$. Denote their normalized sum of $Z_{ij}$ and the CDF of$X_{n(n-2)}$ by
\begin{align*}
X_{n(n-2)}:=\frac{\sum_{i,j}Z_{ij}}{ \sqrt{\sum_{i,j} \sigma_{ij}^2} }, \,\,\,\ F_{X_{n(n-2)}}(x)=\Pr(X_{n(n-2)}\leq x) .
\end{align*}

Next, let $\lambda=1+\frac{\sum_{(i,j) \notin S}\mu_{ij}}{n} -  \frac{\sqrt{2\ln n}}{n} \cdot \sqrt{ \sum_{(i,j) \notin S} \sigma_{ij}^2 }$. With this carefully chosen value of $\lambda$, we are able to show the following lemma.
\begin{lemma}\label{SWsmaller}
 For a given preference matrix $A$, the probability that the social welfare $SW_{RP}(A)$ attainable by Random Priority is less than $\lambda$, is bounded by the following inequality.
\begin{align*}
\Pr \left\{ SW_{RP}(A) \le \lambda \right\} \le \frac{1}{2n \sqrt{\pi \ln n}} + \frac{C' }{ \sqrt{\sum_{i,j} \sigma_{ij}^2}} ,
\end{align*}
where $C'$ is the constant in Theorem \ref{BE2}.
\end{lemma}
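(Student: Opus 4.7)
The plan is to mirror the three-step structure of the proof of Lemma 1 almost line-by-line, substituting Theorem \ref{BE2} for Theorem \ref{BE} at the crucial step and being careful about the different normalization constants that appear when the variables are not identically distributed. The choice of $\lambda$ is already tuned so that the resulting inequalities fit together cleanly, so the main intellectual content is in verifying the Berry-Esseen-type tail bound.

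First I would reproduce the deterministic lower bound
\begin{align*}
SW_{RP}(\mathrm{A}) \;\ge\; \frac{1}{n}\sum_{i=1}^n \sum_{j=1}^n a_{ij} \;=\; 1 + \frac{1}{n}\sum_{(i,j)\notin S} a_{ij},
\end{align*}
which is identical to the i.i.d.\ case because it only uses $\sum_{(i,j)\in S} a_{ij}=n$ and the averaging argument over the $n!$ orderings used by Random Priority. Substituting the stated $\lambda$ and rearranging transforms the event $\{SW_{RP}(\mathrm{A})\le\lambda\}$ into the event $\{\sum_{(i,j)\notin S}(a_{ij}-\mu_{ij})\le -\sqrt{2\ln n}\cdot\sqrt{\sum_{(i,j)\notin S}\sigma_{ij}^2}\}$, i.e., $\{X_{n(n-2)}\le -\sqrt{2\ln n}\}$ in the notation introduced just before the lemma.

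Next I would apply Theorem \ref{BE2} to the independent, mean-zero variables $Z_{ij}=a_{ij}-\mu_{ij}$. The key simplification is that because $0\le a_{ij}\le 1$ and $0\le \mu_{ij}\le 1$, we have $|Z_{ij}|\le 1$, so $\mathrm{E}[|Z_{ij}|^3]\le \mathrm{E}[Z_{ij}^2]=\sigma_{ij}^2$. Summing this over $(i,j)\notin S$ collapses the Berry-Esseen error to
\begin{align*}
\sup_x\bigl|F_{X_{n(n-2)}}(x)-\Phi(x)\bigr| \;\le\; \frac{C'\sum_{(i,j)\notin S}\sigma_{ij}^2}{\bigl(\sum_{(i,j)\notin S}\sigma_{ij}^2\bigr)^{3/2}} \;=\; \frac{C'}{\sqrt{\sum_{(i,j)\notin S}\sigma_{ij}^2}},
\end{align*}
which matches the second term in the lemma's bound. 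I would then evaluate the Gaussian tail at $x=-\sqrt{2\ln n}$ using the standard Mills-ratio estimate $\int_{\sqrt{2\ln n}}^{\infty}\frac{1}{\sqrt{2\pi}}e^{-t^2/2}dt \le \frac{1}{\sqrt{2\pi}\sqrt{2\ln n}}e^{-\ln n} = \frac{1}{2n\sqrt{\pi\ln n}}$, exactly as in Lemma 1.

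Adding the Gaussian tail to the Berry-Esseen error and combining with the deterministic lower bound on $SW_{RP}(\mathrm{A})$ yields the claimed inequality. I do not anticipate a serious obstacle: the only non-cosmetic difference from Lemma 1 is the third-moment bound, and the observation $|Z_{ij}|\le 1 \Rightarrow \mathrm{E}[|Z_{ij}|^3]\le \sigma_{ij}^2$ is what makes the non-i.i.d.\ Berry-Esseen numerator telescope neatly against the denominator so that only $\sqrt{\sum\sigma_{ij}^2}$ survives. The conditions (i) and (ii) stated at the start of the section do not need to be invoked in this lemma itself; they will only be used later when deducing from this lemma that $\lambda\sim\Theta(n)$ and that the error term is $o(1)$, in the theorem that follows.
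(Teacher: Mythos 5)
Your proposal is correct and follows essentially the same route as the paper's own proof: the deterministic bound $SW_{RP}(\mathrm{A})\ge 1+\frac{1}{n}\sum_{(i,j)\notin S}a_{ij}$, the reduction of the event to $\{X_{n(n-2)}\le-\sqrt{2\ln n}\}$, the third-moment bound $\mathrm{E}[|Z_{ij}|^3]\le\sigma_{ij}^2$ collapsing the Berry--Esseen error to $C'/\sqrt{\sum\sigma_{ij}^2}$, and the same Gaussian tail estimate. Your observation that conditions (i) and (ii) are not needed until the subsequent theorem is also consistent with the paper.
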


\begin{proof}
On the one hand, $SW_{RP}(\mathrm{A}) \ge 1 + \frac{1}{n} \sum_{(i,j) \not\in S}  a_{ij}$ implies that
\begin{align}\label{SWsmaller2}
&\Pr \left\{ SW_{RP}(A) \le \lambda \right\} \le \Pr \left\{ 1 + \frac{1}{n} \sum_{(i,j) \not\in S}  a_{ij}  \le \lambda \right\}  \nonumber \\
&= \Pr \left\{ \sum_{(i,j) \not\in S}  a_{ij}  \le \sum_{(i,j) \notin S}\mu_{ij}  - \sqrt{2\ln n} \cdot \sqrt{\sum_{(i,j) \notin S} \sigma_{ij}^2} \right\}
\end{align}

On the other hand, according to Theorem \ref{BE2}, there exists a constant $C'$, such that
\begin{align*}
&\sup_{x}|F_{X_{n(n-2)}}(x)-\Phi(x)|
\leq  \frac{C' \cdot \sum_{i,j} \mathrm{E}[|Z_{ij}|]^{3}}{ \left(\sum_{i,j} \sigma_{ij}^2\right)^{\frac{3}{2}} }.
\end{align*}
Since $|Z_{ij}| \le 1$, we have $\mathrm{E}[|Z_{ij}|]^{3} \le \mathrm{E}[|Z_{ij}|]^{2} = \sigma_{ij}^2$. So, 
\begin{align*}
&\sup_{x}|F_{X_{n(n-2)}}(x)-\Phi(x)|
\leq  \frac{C' }{ \sqrt{\sum_{i,j} \sigma_{ij}^2} }.
\end{align*}
Therefore, $\forall x$,
\begin{align}\label{aij1}
&\left|\Pr\left\{ X_{n(n-2)}  \le x \right\} - \Phi(x)  \right|  \leq   \frac{C' }{ \sqrt{\sum_{i,j} \sigma_{ij}^2} }, \,\,\,\,\   \nonumber \\
&\Rightarrow  \left|\Pr\left\{ \frac{ \sum_{i,j} (a_{ij} - \mu_{ij}) }{\sqrt{\sum_{i,j} \sigma_{ij}^2}} \le x \right\} - \Phi(x)  \right|   \leq \frac{C' }{ \sqrt{\sum_{i,j} \sigma_{ij}^2} }   \nonumber \\
&\Rightarrow \Pr \left\{ \sum_{i,j} a_{ij} \le \sum_{i,j}\mu_{ij}  + x \cdot \sqrt{\sum_{i,j} \sigma_{ij}^2} \right\} \nonumber \\
&\le \int_{-\infty}^{x} \frac{1}{\sqrt{2\pi}} e^{-\frac{t^2}{2}} dt +  \frac{C' }{ \sqrt{\sum_{i,j} \sigma_{ij}^2} } .
\end{align}
Let $x=-\sqrt{2\ln n}$. Obviously, $-x \rightarrow +\infty$ when $n$ approaches infinity. Following (\ref{aij1}), we obtain that
\begin{align}\label{aij2}
&\Pr \left\{ \sum_{i,j}  a_{ij}  \le \sum_{i,j}\mu_{ij}  - \sqrt{2\ln n} \cdot \sqrt{\sum_{i,j} \sigma_{ij}^2} \right\} \nonumber \\
&\le \int_{-\infty}^{-\sqrt{2\ln n}} \frac{1}{\sqrt{2\pi}} e^{-\frac{t^2}{2}} dt + \frac{C' }{ \sqrt{\sum_{i,j} \sigma_{ij}^2}}  \nonumber \\
&\le \frac{1}{\sqrt{2\pi}\sqrt{2\ln n}} e^{-\ln n}  + \frac{C' }{ \sqrt{\sum_{i,j} \sigma_{ij}^2}} \nonumber  \\
&\le \frac{1}{2n \sqrt{\pi \ln n}} + \frac{C' }{ \sqrt{\sum_{i,j} \sigma_{ij}^2}}  .
\end{align}

Combining (\ref{SWsmaller2}) and (\ref{aij2}),  we complete the proof.
\end{proof}

Finally, we are ready to prove the main theorem of this section.

\begin{theorem}
The average-case approximation ratio of Random Priority, when agents' preferences are independent but not necessarily identically distributed random variables, is upper bounded by a constant.
\end{theorem}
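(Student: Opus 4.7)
The plan is to follow the same two-case partition strategy used in the proof of Theorem~2, splitting the expected ratio at the threshold $\lambda$ defined just before Lemma~2. Writing $M:=\sum_{(i,j)\notin S}\mu_{ij}$ and $V:=\sum_{(i,j)\notin S}\sigma_{ij}^2$, we have $\lambda = 1 + M/n - \sqrt{2V\ln n}/n$, and using $SW_{\mathrm{OPT}}(\mathrm{A})\le n$ together with $r_{\text{worst}}\le c_1\sqrt{n}$ from \cite{DBLP:conf/sagt/Filos-RatsikasF014} gives
\begin{align*}
r_{\text{average}}(\mathrm{RP}) \;\le\; \Pr\{SW_{\mathrm{RP}}(\mathrm{A})>\lambda\}\cdot\frac{n}{\lambda} \;+\; \Pr\{SW_{\mathrm{RP}}(\mathrm{A})\le\lambda\}\cdot c_1\sqrt{n}.
\end{align*}
The remainder of the proof is to show that each of the two conditions (i) and (ii) exactly controls one of these two summands.

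For the ``typical'' factor $n/\lambda$, condition (i) yields $M=\Omega(n^2)$, hence a linear main term $M/n=\Omega(n)$. The correction $\sqrt{2V\ln n}/n$ is at most $O(\sqrt{\ln n})$ using only the crude variance bound $\sigma_{ij}^2\le 1/4$ (which holds since $a_{ij}\in[0,1]$), so it is $o(n)$ and cannot destroy the linear growth. Consequently $\lambda=\Theta(n)$ and $n/\lambda$ converges to a constant determined by the implicit constant in condition~(i). Multiplying by the trivial bound $\Pr\{SW_{\mathrm{RP}}(\mathrm{A})>\lambda\}\le 1$ leaves this term bounded.

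For the ``tail'' contribution, Lemma~2 gives $\Pr\{SW_{\mathrm{RP}}(\mathrm{A})\le\lambda\}\le \frac{1}{2n\sqrt{\pi\ln n}}+\frac{C'}{\sqrt{V}}$. Multiplying by $c_1\sqrt{n}$ produces two summands: $\frac{c_1}{2\sqrt{\pi n\ln n}}$, which vanishes; and $\frac{C'c_1\sqrt{n}}{\sqrt{V}}$, which vanishes because condition~(ii) states precisely $V=\omega(n)$, i.e.\ $n/V\to 0$. Adding the two bounded pieces produces a constant upper bound on $r_{\text{average}}(\mathrm{RP})$ for all sufficiently large $n$; one can then use the same $\frac{1}{1-|t|}<1+2|t|$ expansion as in Theorem~2 to make the constant explicit in terms of the implicit constants in (i) and (ii).

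The main obstacle — and the reason conditions (i) and (ii) take the specific form they do — is jointly calibrating $\lambda$ so that \emph{both} halves of the split give bounded contributions simultaneously. Pushing $\lambda$ upward to make $\Pr\{SW_{\mathrm{RP}}(\mathrm{A})\le\lambda\}$ small forces it to be comparable to $M/n$, so we need $M/n=\Omega(n)$ to control $n/\lambda$, which is exactly (i). Meanwhile, the Berry–Esseen tail bound from Theorem~\ref{BE2} decays only like $1/\sqrt{V}$, which must beat the $r_{\text{worst}}=\Theta(\sqrt{n})$ blow-up; this is precisely what (ii) guarantees. Once these two compatibility constraints are pinned down, the computation proceeds almost verbatim as in the i.i.d.\ case with $M$ and $V$ playing the roles of $\mu n(n-2)$ and $\sigma^2 n(n-2)$.
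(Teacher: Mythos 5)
Your proposal is correct and follows essentially the same route as the paper: the same two-case split at the same $\lambda$, the same use of Lemma~2 and $r_{\text{worst}}\le c_1\sqrt{n}$, with condition~(i) controlling $n/\lambda$ (the paper writes the limiting constant as $1/\underline{\mu}$ where $n^2/\sum\mu_{ij}\le 1/\underline{\mu}$) and condition~(ii) killing the $C'c_1\sqrt{n}/\sqrt{V}$ term. The only cosmetic difference is that you bound $V$ via $\sigma_{ij}^2\le 1/4$ while the paper uses $\sum\sigma_{ij}^2<n^2$; both suffice.
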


\begin{proof}
We partition the value space of the ratio into two cases according to the threshold parameter $\lambda$. In each case, we multiply the ratio by the probability that the case occurs. That is,
\begin{align*}
&r_{\text{average}}(\mathrm{RP}) = \mathop{\mathop{\mathrm{E}}}_{a_{ij}\sim \mathrm{D_{ij}}}\left[ \frac{SW_{\mathrm{OPT}}(\mathrm{A})}{SW_{\mathrm{RP}}(\mathrm{A})} \right] \\
&\leq \Pr\{SW_{RP}(\mathrm{A})> \lambda \}\cdot \frac{n}{\lambda} + \Pr\{SW_{RP}(\mathrm{A}) \leq \lambda \}\cdot c_1 \sqrt{n} 
\end{align*}
We assemble the above building blocks and get that 
\begin{align*}
r_{\text{average}}(\mathrm{RP})  &\le 1 \cdot \frac{n}{1+\frac{\sum_{i,j}\mu_{ij}}{n} -  \frac{\sqrt{2\ln n}}{n} \cdot \sqrt{ \sum_{i,j} \sigma_{ij}^2 }}   + \left( \frac{1}{2n \sqrt{\pi \ln n}} + \frac{C' }{ \sqrt{\sum_{i,j} \sigma_{ij}^2}} \right) \cdot c_1 \sqrt{n} \\
& \le \frac{n^2}{\sum \mu_{i,j}} \left( \frac{1}{1- \left( \sqrt{2 \ln n} \cdot \frac{ \sqrt{\sum \sigma_{ij}^2} }{\sum \mu_{ij} } - \frac{n}{\sum \mu_{ij}} \right) } \right)  + \frac{c_1}{2\sqrt{\pi n \ln n}} + \frac{C' \cdot c_1 \sqrt{n}}{\sqrt{ \sum \sigma_{ij}^2} }
\end{align*}
According to the first condition, there  $\exists \underline{\mu}$, $\exists N_0$, such that $\forall n>N_0$, $\frac{n^2}{\sum \mu_{ij}} \le \frac{1}{\underline{\mu}}$. Note that $0<a_{ij}<1$ and $0<\mu_{ij}<1$, so $\sum \mu_{ij} < n^2$, hence  $\frac{1}{\underline{\mu}}>1$. We also know that $\sum \sigma_{ij}^2 < n^2$. Together with the first condition, we have that $\sqrt{2 \ln n} \cdot \frac{ \sqrt{\sum \sigma_{ij}^2} }{\sum \mu_{ij} } \rightarrow 0$ and $\frac{n}{\sum \mu_{ij}} \rightarrow 0$, when $n \rightarrow \infty$. Therefore, there $\exists N_1$, such that when $n>N_1$, $\left| \frac{\sqrt{2 \ln n} \sqrt{\sum \sigma_{ij}^2}}{\sum \mu_{ij}} - \frac{n}{\sum \mu_{ij}} \right| < \frac{1}{2}$. Hence, when $n>\max \{N_0, N_1\}$, we have that 
\begin{align*}
r_{\text{average}}(\mathrm{RP})  
& \le \frac{1}{\underline{\mu}} \left( 1 + 2 \left| \frac{\sqrt{2 \ln n} \sqrt{\sum \sigma_{ij}^2}}{\sum \mu_{ij}} - \frac{n}{\sum \mu_{ij}} \right| \right)  + \frac{c_1}{2\sqrt{\pi n \ln n}} + \frac{C' \cdot c_1 \sqrt{n}}{\sqrt{ \sum \sigma_{ij}^2} } \\
&= \frac{1}{\underline{\mu}} +  \frac{2}{\underline{\mu}} \left| \frac{\sqrt{2 \ln n} \sqrt{\sum \sigma_{ij}^2}}{\sum \mu_{ij}} - \frac{n}{\sum \mu_{ij}} \right|   + \frac{c_1}{2\sqrt{\pi n \ln n}} + \frac{C' \cdot c_1 \sqrt{n}}{\sqrt{ \sum \sigma_{ij}^2} } \\
\end{align*}
Because of condition (ii), we know that $ \frac{\sqrt{n}}{\sqrt{ \sum \sigma_{ij}^2} } \rightarrow 0$. In conclusion, 
\begin{align*}
r_{\text{average}}(\mathrm{RP})  \rightarrow \frac{1}{\underline{\mu}} + 0 = \frac{1}{\underline{\mu}}.
\end{align*}
\end{proof}


\section{Conclusion}
This paper extended the average-case analysis in \cite{DBLP:conf/mfcs/DengG017} from a uniform distribution to any independent distribution and showed a constant upper bound of the approximation ratio. The average-case analysis complements classical worst-case analysis when the worst-case performance is insufficient to characterize the performance of a mechanism. Our results further justify the wide-applicability of the Random Priority mechanism.

There are a few technical points we would like to highlight here. Firstly, the techniques presented in this paper are probably applicable to analyzing other mechanisms and other domains. Secondly, there are various generalizations of the Berry-Esseen Theorem, and each of them may be cast to prove similar results in Section 4. In an earlier version of the present paper, we had also independently proved  a version of the convergence rate. The difference in utilizing different versions of these theorems is that they each require a set of conditions to make the rest of the proof work, and the interpretations of those conditions could be different. 

There are a number of problems remain open as well. For example, one may be interested in investigating the average-case ratio of Random Priority in correlated domains. Also, it would be interesting to proving tighter bounds by making more use of the structure of the assignment problem domain.




\bibliographystyle{named}
\bibliography{refs}

\begin{thebibliography}{39}
\providecommand{\natexlab}[1]{#1}
\providecommand{\url}[1]{\texttt{#1}}
\expandafter\ifx\csname urlstyle\endcsname\relax
  \providecommand{\doi}[1]{doi: #1}\else
  \providecommand{\doi}{doi: \begingroup \urlstyle{rm}\Url}\fi

\bibitem[DBL()]{DBLP:conf/aaai/2015}


\bibitem[Abdulkadiro{\u{g}}lu and S{\"o}nmez(2013)]{AS:13}
Atila Abdulkadiro{\u{g}}lu and Tayfun S{\"o}nmez.
\newblock {Matching Markets: Theory and Practice}.
\newblock \emph{Advances in Economics and Econometrics (Tenth World Congress)},
  pages 3--47, 2013.

\bibitem[Abdulkadiroglu and
  Sönmez(1998)]{RePEc:ecm:emetrp:v:66:y:1998:i:3:p:689-702}
Atila Abdulkadiroglu and Tayfun Sönmez.
\newblock Random serial dictatorship and the core from random endowments in
  house allocation problems.
\newblock \emph{Econometrica}, 66\penalty0 (3):\penalty0 689--702, 1998.
\newblock URL
  \url{https://EconPapers.repec.org/RePEc:ecm:emetrp:v:66:y:1998:i:3:p:689-702}.

\bibitem[Abdulkadiroglu and
  Sönmez(1999)]{RePEc:eee:jetheo:v:88:y:1999:i:2:p:233-260}
Atila Abdulkadiroglu and Tayfun Sönmez.
\newblock House allocation with existing tenants.
\newblock \emph{Journal of Economic Theory}, 88\penalty0 (2):\penalty0
  233--260, 1999.
\newblock URL
  \url{https://EconPapers.repec.org/RePEc:eee:jetheo:v:88:y:1999:i:2:p:233-260}.

\bibitem[Anshelevich(2016)]{DBLP:journals/sigecom/Anshelevich16}
Elliot Anshelevich.
\newblock Ordinal approximation in matching and social choice.
\newblock \emph{SIGecom Exchanges}, 15\penalty0 (1):\penalty0 60--64, 2016.
\newblock \doi{10.1145/2994501.2994504}.
\newblock URL \url{https://doi.org/10.1145/2994501.2994504}.

\bibitem[Anshelevich et~al.(2018)Anshelevich, Bhardwaj, Elkind, Postl, and
  Skowron]{DBLP:journals/ai/AnshelevichBEPS18}
Elliot Anshelevich, Onkar Bhardwaj, Edith Elkind, John Postl, and Piotr
  Skowron.
\newblock Approximating optimal social choice under metric preferences.
\newblock \emph{Artif. Intell.}, 264:\penalty0 27--51, 2018.

\bibitem[Archer et~al.(2003)Archer, Papadimitriou, Talwar, and
  Tardos]{DBLP:conf/soda/ArcherPTT03}
Aaron Archer, Christos~H. Papadimitriou, Kunal Talwar, and {\'{E}}va Tardos.
\newblock An approximate truthful mechanism for combinatorial auctions with
  single parameter agents.
\newblock In \emph{SODA}, pages 205--214, 2003.

\bibitem[Aziz et~al.(2013)Aziz, Brandt, and Brill]{DBLP:conf/wine/AzizBB13}
Haris Aziz, Felix Brandt, and Markus Brill.
\newblock The computational complexity of random serial dictatorship.
\newblock In \emph{{WINE}}, pages 24--25, 2013.
\newblock \doi{10.1007/978-3-642-45046-4\_3}.
\newblock URL \url{https://doi.org/10.1007/978-3-642-45046-4\_3}.

\bibitem[Barbera(2010)]{Barbera10}
Salvador Barbera.
\newblock {Strategy-proof Social Choice}.
\newblock In K.~J. Arrow, A.~K. Sen, and K.~Suzumura, editors, \emph{Handbook
  of Social Choice and Welfare}, volume~2, chapter~25. North-Holland:
  Amsterdam, 2010.

\bibitem[Bhalgat et~al.(2011)Bhalgat, Chakrabarty, and Khanna]{BCK:11}
Anand Bhalgat, Deeparnab Chakrabarty, and Sanjeev Khanna.
\newblock Social welfare in one-sided matching markets without money.
\newblock In \emph{APPROX-RANDOM}, pages 87--98, 2011.

\bibitem[Bogomolnaia and Moulin(2001)]{BM:01}
Anna Bogomolnaia and Herv\'{e} Moulin.
\newblock {A New Solution to the Random Assignment Problem}.
\newblock \emph{Journal of Economic Theory}, 100:\penalty0 295--328, 2001.

\bibitem[Chawla and Sivan(2014)]{DBLP:journals/sigecom/ChawlaS14}
Shuchi Chawla and Balasubramanian Sivan.
\newblock Bayesian algorithmic mechanism design.
\newblock \emph{SIGecom Exchanges}, 13\penalty0 (1):\penalty0 5--49, 2014.

\bibitem[Chen et~al.(2013)Chen, Lai, Parkes, and
  Procaccia]{DBLP:journals/geb/ChenLPP13}
Yiling Chen, John Lai, David~C. Parkes, and Ariel~D. Procaccia.
\newblock Truth, justice, and cake cutting.
\newblock \emph{Games and Economic Behavior}, 77\penalty0 (1):\penalty0
  284--297, 2013.

\bibitem[Christodoulou et~al.(2009)Christodoulou, Koutsoupias, and
  Vidali]{DBLP:journals/algorithmica/ChristodoulouKV09}
G.~Christodoulou, E.~Koutsoupias, and A.~Vidali.
\newblock A lower bound for scheduling mechanisms.
\newblock \emph{Algorithmica}, 55\penalty0 (4):\penalty0 729--740, 2009.

\bibitem[Christodoulou et~al.(2016)Christodoulou, Filos{-}Ratsikas,
  Frederiksen, Goldberg, Zhang, and Zhang]{DBLP:conf/atal/ChristodoulouFF16}
George Christodoulou, Aris Filos{-}Ratsikas, S{\o}ren Kristoffer~Stiil
  Frederiksen, Paul~W. Goldberg, Jie Zhang, and Jinshan Zhang.
\newblock Social welfare in one-sided matching mechanisms.
\newblock In \emph{Proceedings of the 2016 International Conference on
  Autonomous Agents {\&} Multiagent Systems}, pages 1297--1298, 2016.
\newblock URL \url{http://dl.acm.org/citation.cfm?id=2937128}.

\bibitem[Deng et~al.(2017)Deng, Gao, and Zhang]{DBLP:conf/mfcs/DengG017}
Xiaotie Deng, Yansong Gao, and Jie Zhang.
\newblock Smoothed and average-case approximation ratios of mechanisms: Beyond
  the worst-case analysis.
\newblock In \emph{42nd International Symposium on Mathematical Foundations of
  Computer Science, {MFCS}}, pages 16:1--16:15, 2017.
\newblock \doi{10.4230/LIPIcs.MFCS.2017.16}.
\newblock URL \url{https://doi.org/10.4230/LIPIcs.MFCS.2017.16}.

\bibitem[Dughmi and Ghosh(2010)]{DG:10}
Shaddin Dughmi and Arpita Ghosh.
\newblock {Truthful assignment without money}.
\newblock In \emph{ACM Conference on Electronic Commerce}, pages 325--334,
  2010.

\bibitem[Durrett(2011)]{RickDurrett}
Rick Durrett.
\newblock \emph{Probability: Theory and examples}.
\newblock Cambridge University Press, 2011.

\bibitem[Esseen(1945)]{Esseen1945}
Carl-Gustav Esseen.
\newblock Fourier analysis of distribution functions. a mathematical study of
  the laplace-gaussian law.
\newblock \emph{Acta Mathematica}, 77:\penalty0 1--125, 1945.

\bibitem[Filos{-}Ratsikas et~al.(2014)Filos{-}Ratsikas, Frederiksen, and
  Zhang]{DBLP:conf/sagt/Filos-RatsikasF014}
Aris Filos{-}Ratsikas, S{\o}ren Kristoffer~Stiil Frederiksen, and Jie Zhang.
\newblock Social welfare in one-sided matchings: Random priority and beyond.
\newblock In \emph{Algorithmic Game Theory, {SAGT} 2014}, pages 1--12, 2014.
\newblock \doi{10.1007/978-3-662-44803-8\_1}.
\newblock URL \url{https://doi.org/10.1007/978-3-662-44803-8\_1}.

\bibitem[Filos{-}Ratsikas et~al.(2015)Filos{-}Ratsikas, Li, Zhang, and
  Zhang]{DBLP:conf/aaai/Filos-RatsikasL15}
Aris Filos{-}Ratsikas, Minming Li, Jie Zhang, and Qiang Zhang.
\newblock Facility location with double-peaked preferences.
\newblock In \emph{Proceedings of the Twenty-Ninth AAAI Conference on
  Artificial Intelligence} \citet{DBLP:conf/aaai/2015}, pages 893--899.
\newblock URL
  \url{http://www.aaai.org/ocs/index.php/AAAI/AAAI15/paper/view/9379}.

\bibitem[Hajaj et~al.(2015)Hajaj, Dickerson, Hassidim, Sandholm, and
  Sarne]{DBLP:conf/aaai/HajajDHSS15}
Chen Hajaj, John~P. Dickerson, Avinatan Hassidim, Tuomas Sandholm, and David
  Sarne.
\newblock Strategy-proof and efficient kidney exchange using a credit
  mechanism.
\newblock In \emph{AAAI} \citet{DBLP:conf/aaai/2015}, pages 921--928.

\bibitem[Hartline and Lucier(2010)]{DBLP:conf/stoc/HartlineL10}
Jason~D. Hartline and Brendan Lucier.
\newblock Bayesian algorithmic mechanism design.
\newblock In \emph{Proceedings of the 42nd {ACM} Symposium on Theory of
  Computing, {STOC} 2010}, pages 301--310, 2010.

\bibitem[Hartline and Roughgarden(2009)]{DBLP:conf/sigecom/HartlineR09}
Jason~D. Hartline and Tim Roughgarden.
\newblock Simple versus optimal mechanisms.
\newblock In \emph{ACM-EC}, pages 225--234, 2009.

\bibitem[Hylland and Zeckhauser(1979)]{HZ:79}
Aanund Hylland and Richard Zeckhauser.
\newblock {The Efficient Allocation of Individuals to Positions}.
\newblock \emph{The Journal of Political Economy}, 87\penalty0 (2):\penalty0
  293--314, 1979.

\bibitem[Katta and Sethuraman(2006)]{katta2006solution}
Akshay-Kumar Katta and Jay Sethuraman.
\newblock A solution to the random assignment problem on the full preference
  domain.
\newblock \emph{Journal of Economic theory}, 131\penalty0 (1):\penalty0
  231--250, 2006.

\bibitem[Kesten(2009)]{Kesten2009}
Onur Kesten.
\newblock Why do popular mechanisms lack efficiency in random environments?
\newblock \emph{Journal of Economic Theory}, 144:\penalty0 2209--2226, 2009.

\bibitem[Koutsoupias(2014)]{DBLP:journals/mst/Koutsoupias14}
Elias Koutsoupias.
\newblock Scheduling without payments.
\newblock \emph{Theory Comput. Syst.}, 54\penalty0 (3):\penalty0 375--387,
  2014.

\bibitem[Mennle and Seuken(2014)]{mennle2014axiomatic}
Timo Mennle and Sven Seuken.
\newblock {An axiomatic approach to characterizing and relaxing
  strategyproofness of one-sided matching mechanisms}.
\newblock In \emph{ACM-EC}, pages 37--38, 2014.

\bibitem[Mu'alem and Nisan(2008)]{DBLP:journals/geb/MualemN08}
Ahuva Mu'alem and Noam Nisan.
\newblock Truthful approximation mechanisms for restricted combinatorial
  auctions.
\newblock \emph{Games and Economic Behavior}, 64\penalty0 (2):\penalty0
  612--631, 2008.

\bibitem[Nisan and Ronen(1999)]{NR99}
Noam Nisan and Amir Ronen.
\newblock Algorithmic mechanism design (extended abstract).
\newblock In \emph{Proceedings of the Thirty-First Annual ACM Symposium on
  Theory of Computing (STOC)}, pages 129--140, 1999.

\bibitem[Nisan et~al.(2007)Nisan, Roughgarden, Tardos, and Vazirani]{NRTV07}
Noam Nisan, Tim Roughgarden, Eva Tardos, and Vijay~V. Vazirani, editors.
\newblock \emph{Algorithmic Game Thoery}.
\newblock Cambridge University Press, 2007.

\bibitem[Pathak and Sethuraman(2011)]{PathakSethuraman2011}
P.~A. Pathak and J.~Sethuraman.
\newblock Lotteries in student assignment: An equivalence result.
\newblock \emph{Theoretical Economics}, 6, 2011.

\bibitem[Procaccia and Tennenholtz(2009)]{PT:09}
Ariel~D Procaccia and Moshe Tennenholtz.
\newblock {Approximate mechanism design without money}.
\newblock In \emph{ACM-EC}, pages 177--186, 2009.

\bibitem[Roth and Sotomayor(1990)]{RothSotomayorNavy}
A.~E. Roth and M.~Sotomayor.
\newblock \emph{Two-Sided Matching: A Study in Game-Theoretic Modeling and
  Analysis}.
\newblock Cambridge University Press, 1990.

\bibitem[S{\"o}nmez and {\"U}nver(2011)]{SU:11}
Tayfun S{\"o}nmez and Utku {\"U}nver.
\newblock {Matching, allocation and exchange of discrete resources}.
\newblock \emph{Handbook of Social Economics}, 1A:\penalty0 781--852, 2011.

\bibitem[Svensson(1999)]{Sv99}
Lars-Gunnar Svensson.
\newblock Strategy-proof allocation of indivisible goods.
\newblock \emph{Social Choice and Welfare}, 16\penalty0 (4):\penalty0 557--567,
  1999.

\bibitem[Zhang(2018)]{DBLP:conf/aaai/Zhang18}
Jie Zhang.
\newblock Average-case approximation ratio of scheduling without payments.
\newblock In \emph{Proceedings of the Thirty-Second {AAAI} Conference on
  Artificial Intelligence, (AAAI-18)}, pages 1298--1305, 2018.

\bibitem[Zhou(1990)]{ZHOU:90}
Lin Zhou.
\newblock {On a Conjecture by Gale about One-Sided Matching Problems}.
\newblock \emph{Journal of Economic Theory}, 52:\penalty0 123--135, 1990.

\end{thebibliography}

\end{document}